\def\Z{\mathbb Z}
\def\C{\mathbb C}
\def\1{{\bf{1}}}
\def\footnoterule{\kern 1mm \hrule width 7cm \kern 2.2mm}%
 \newtheorem{thm}{Theorem}[section]
 \newtheorem{prp}[thm]{Proposition}
 \newtheorem{dfn}[thm]{Definition}
 \newtheorem{rmk}[thm]{Remark}
\begin{document}

\title{$\pi$-type Fermions and $\pi$-type KP hierarchy}

\author{
\  \ Na Wang\dag, Chuanzhong Li\ddag\footnote{Corresponding author:lichuanzhong@nbu.edu.cn.}\\
\dag\small Department of Mathematics and Statistics, Henan University, Kaifeng, 475001, China.\\
\ddag\small Department of Mathematics,  Ningbo University, Ningbo, 315211, China}

\date{}

\maketitle

\begin{abstract}
In this paper, we firstly construct $\pi$-type Fermions. According to these, we define $\pi$-type Boson-Fermion correspondence which is a generalization of the classical Boson-Fermion correspondence. We can obtain $\pi$-type symmetric functions $S_\lambda^\pi$ from the $\pi$-type Boson-Fermion correspondence, analogously to the way we get the Schur functions $S_\lambda$ from the classical Boson-Fermion correspondence (which is the same thing as the Jacobi-Trudi formula). Then as a generalization of KP hierarchy, we construct the $\pi$-type KP hierarchy and obtain its tau functions.

\end{abstract}
\noindent
{\bf 2010 Mathematics Subject Classification:}
 37K05,
37K10. 

{\bf Key words: }{Boson-Fermion Correspondence, $\pi$-type symmetric functions, $\pi$-type Fermions, $\pi$-type KP hierarchy.}

\section{Introduction}\label{sec1}
Two-dimensional Fermions and Boson-Fermion correspondence are well-known in mathematical physics. Meanwhile Young diagrams and symmetric functions are of interest to many researchers and have many applications in mathematics including combinatorics and representation theory of the symmetric and general linear group. There are many relations between them.

The Kakomtsev-Petviashvili (KP) hierarchy\cite{DKJM} is one of the most important integrable hierarchies and it arises in many different fields of mathematics and physics such as enumerative algebraic geometry, topological field and string theory. Schur functions have close relations with the tau functions of KP hierarchy. Schur functions in variables $x_1,x_2,\cdots,x_n$ are well known to give the characters of finite dimensional irreducible representations of the general linear groups $GL(n)$\cite{Mac,FH}. From \cite{MJD,NR,Jing}, Schur functions can be realized from vector operators and these vertex operators correspond to free Fermions acting on Bosonic Fock space. It turns out that the Boson-Fermion correspondence and the Jacobi-Trudi formula are the same thing, which tells us that Schur functions are solutions of differential equations in KP hierarchy, and the linear combinations of Schur functions with coefficients satisfied some relations (p\"uker relations) are also tau functions of KP hierarchy.

The $\pi$-type symmetric functions are upgraded from Schur functions in the same setting. The linear basis of $\pi$-type symmetric functions provides the structure of the universal character ring of group $H_\pi$ (subgroup of $GL(n)$)\cite{BJK,HW,Litt}. Like Schur functions, $\pi$-type symmetric functions can also be realized from vertex operators which are constructed in \cite{FJK}. Then free Fermions can be constructed and there exists for sure an integrable system. In this paper, we will construct this integrable system, and find that the $\pi$-type symmetric functions are the solutions of differential equations in this integrable system.

The paper is organized as follows. In section \ref{sect2}, we recall the $\pi$-type symmetric functions and vertex operators associated with them. In section \ref{sect3}, we recall Schur functions and KP hierarchy. In section \ref{sect4}, we define $\pi$-type Fermions and construct $\pi$-type Boson-Fermion correspondence from which we can calculate $\pi$-type symmetric functions. In section \ref{sect5}, we construct the $\pi$-type KP hierarchy and analyze its tau functions.
\section{$\pi$-type symmetric function and vertex operator}\label{sect2}
We begin this section with some notational preliminaries\cite{FJK}. Let $\Lambda({\bf x})$ be the ring of symmetric functions of a countably infinite alphabet of variables ${\bf x}=\{x_1,x_2,\cdots\}$. The power sum symmetric functions $p_n({\bf x})$ are
\[
p_n({\bf x})=\sum_{k}x_k^n.
\]
The operators $p_n$ and $n\partial_{p_n}$ ($\partial_{p_n}:=\partial/\partial_{p_n}$) give a representation of the infinite dimensional Heisenberg algebra generated by $a_n, n\in\Z, n\neq 0$ with the relation
\begin{equation}\label{hei}
[a_n,a_m]=n{\delta}_{n+m,0}.
\end{equation}
The vertex operator are defined with the help of Heisenberg algebras
\begin{eqnarray}
&&M(z,{\bf x})=\prod_k\frac{1}{1-zx_k}=\exp(\sum_{n=1}^\infty\frac{p_n}{n}z^n)\\
&&L(z,{\bf x})=\prod_k{(1-zx_k)}=\exp(-\sum_{n=1}^\infty\frac{p_n}{n}z^n)
\end{eqnarray}
where $L(z,{\bf x})=M(z,{\bf x})^{-1}$. In the special case $z=1$, we set $M({\bf x})=M(1,{\bf x})$ and $L({\bf x})=L(1,{\bf x})$. When ${\bf x}$ is to be understood, we often write $L(z,{\bf x})$ and $M(z,{\bf x})$ by $L(z)$ and $M(z)$ respectively for short.

For a Young diagram $\lambda$, let $\mathcal{T}^\lambda$ denote the set of semistandard tableaux $T$ of shape $\lambda$ with entries from $\{1,2,\cdots,n\}$, and let ${\bf x}^T=x_1^{\#1}x_2^{\#2}\cdots x_n^{\#n}$ where $\#k$ is the number of entries $k$ in $T$, then the Schur function
\[S_\lambda({\bf x})=\sum_{T\in\mathcal{T}^\lambda}{\bf x}^T
\]
Schur functions are an orthonormal basis of the ring $\Lambda({\bf x})$. The operation of symmetric function skew is defined by duality as
\[
\langle g^\perp f|h\rangle=\langle f|g\cdot f\rangle
\]
Given two Schur functions $S_\lambda$ and $S_\mu$, the skew Schur function $S_\mu^\perp S_\lambda=S_{\lambda/\mu}$.

The plethysm is defined as follows\cite{Mac}. Let $f({\bf x})=\sum_i y_i$. Consider these monomials as elements of a new countably infinite alphabet ${\bf y}=\{y_1,y_2,\cdots\}$. Then for any Schur function $S_\lambda$, the plethysm of $f$ by $S_\lambda$, $S_\lambda[f]({\bf x}):=S_\lambda ({\bf y})$ is the symmetric function of the composite alphabet. For any Young diagram $\pi$,
\begin{eqnarray}
&&M_\pi(z,{\bf x})=\prod_{T\in \mathcal {T}^\pi}\frac{1}{1-z{\bf x}^T}=\sum_{r\geq 0}z^rS_{(r)}[S_\pi]({\bf x})\label{ml1}\\
&&L_\pi(z,{\bf x})=\prod_{T\in \mathcal {T}^\pi}{(1-z{\bf x}^T)}=\sum_{r\geq 0}(-1)^rz^rS_{(1^r)}[S_\pi]({\bf x})
\end{eqnarray}
for arbitrary $\pi$, we have
\[
S_\lambda^\pi({\bf x})=L^\perp_\pi({\bf x})S_\lambda
\]
the symmetric functions of this type correspond to the branching rule from a module of the general linear group to (generically indecomposable) module of the $H_\pi$ subgroup.

Define
\begin{equation}
V_\pi(z)=M(z)L^\perp(z^{-1})\prod_{k>0}L^\perp_{\pi/(k)}(z^k),
\end{equation}
then we have\begin{equation}
S_\lambda^\pi=[{\bf z}^\lambda]V_\pi(z_1)V_\pi(z_2)\cdots V_\pi(z_k)\cdot 1
\end{equation}
where $[{\bf z}^\lambda]$ means selecting the coefficient of $z_1^{\lambda_1}z_2^{\lambda_2}\cdots z_k^{\lambda_k}$. In order to obtain the complete set of exchange relations between the $\pi$-type vertex operators it is necessary to introduce suitably constructed dual vertex operators $V^*_\pi(z)$.
\begin{thm}(theorem 1 in \cite{FJK})
For each partition $\pi$ and any $z$ let
\begin{eqnarray}
V_\pi(z)&:=&M(z)L^\perp(z^{-1})\prod_{k>0}L^\perp_{\pi/(k)}(z^k),\\
V_\pi^*(z)&:=&L(z)M^\perp(z^{-1})\prod_{k\geq 0}M^\perp_{\pi/(1^{2k+1})}(z^{2k+1})\prod_{k>0}L^\perp_{\pi/(k)}(z^{2k}),
\end{eqnarray}
where it is to be understood that all the Schur functions in $M(w),L(w),M^\perp(w)$ and $L^\perp(w)$, for any $w$, depend on the same sequence of variables $(x_1,x_2,\cdots)$ whose specification, for the sake of simplicity, has been suppressed.

Furthermore, let the associated full vertex operators $X^\pi(z)$ and $X^{*\pi}(z)$, constructed by adjoining zero mode contributions, be defined by
\begin{eqnarray}
X^\pi(z)&=&V_\pi(z)e^Kz^{H_0}:=\sum_{n\in\Z+1/2}X_j^\pi z^{-j-1/2+H_0}\\
X^{*\pi}(z)&=&V_\pi^*(z)e^{-K}z^{-H_0}:=\sum_{n\in\Z+1/2}X_j^{*\pi} z^{-j-1/2-H_0}
\end{eqnarray}
then we have the modes $X^\pi(z)$ and $X^{*\pi}(z)$ fulfil the free Fermion anticommutation relations of  a complex Clifford algebra:
\[
\{X_i^\pi, X_j^\pi\}=0,\ \{X_i^{*\pi}, X_j^{*\pi}\}=0,\ \{X_i^\pi, X_j^{*\pi}\}=\delta_{i+j,0}.
\]
where $\{\cdot,\cdot\}$ signifies an anticommutator.
\end{thm}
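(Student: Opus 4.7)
The plan is to verify the three Clifford anticommutation relations by computing the operator products $X^\pi(z)X^\pi(w)$, $X^{*\pi}(z)X^{*\pi}(w)$ and $X^\pi(z)X^{*\pi}(w)$, in each case extracting a scalar prefactor (a rational function of $z,w$) times a normally ordered operator, and then reading off modes by formal residue. The strategy parallels the standard derivation of free Fermions from the Jacobi--Trudi vertex operator $V(z)=M(z)L^\perp(z^{-1})$; the new content is to show that the plethystic correction factors in $V_\pi(z)$ and $V_\pi^*(z)$ do not spoil the Clifford structure.

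First I would record the elementary Heisenberg OPEs. From $[\partial_{p_n},p_m]=\delta_{nm}$ and the Baker--Campbell--Hausdorff identity one obtains, for instance,
\[
L^\perp(z^{-1})M(w)=(1-w/z)\,M(w)L^\perp(z^{-1}),\quad M^\perp(z^{-1})L(w)=(1-w/z)\,L(w)M^\perp(z^{-1}),
\]
together with $M^\perp(z^{-1})M(w)=(1-w/z)^{-1}M(w)M^\perp(z^{-1})$ and $L^\perp(z^{-1})L(w)=(1-w/z)^{-1}L(w)L^\perp(z^{-1})$. Plethystic substitution then lifts these to OPEs for the operators $M_\pi, L_\pi, M_\pi^\perp, L_\pi^\perp$ and their skew variants, evaluated at the rescaled arguments $z^k, z^{2k+1}, z^{2k}$ that appear inside $V_\pi$ and $V_\pi^*$.

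Next I would apply these OPEs to normal-order the three $V$-products, then adjoin the zero modes via $z^{H_0}e^K = z\,e^K z^{H_0}$ and $z^{H_0}e^{-K}=z^{-1}e^{-K}z^{H_0}$. In the like--like cases the combined scalar prefactor is expected to be antisymmetric under $z\leftrightarrow w$, so that summing the two orderings gives zero and thus $\{X_i^\pi,X_j^\pi\}=0$ and $\{X_i^{*\pi},X_j^{*\pi}\}=0$. In the mixed case the prefactor from $X^\pi(z)X^{*\pi}(w)$ is the formal power-series expansion of $(z-w)^{-1}$ for $|w|<|z|$ and that from $X^{*\pi}(w)X^\pi(z)$ is the same rational expression expanded for $|z|<|w|$; their anticommutator-sum is the formal delta $\delta(z,w)=\sum_n z^n w^{-n-1}$, and on the diagonal the normally ordered factor reduces to the identity, producing the Fourier modes $\delta_{i+j,0}$.

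The main obstacle is the bookkeeping in this second step: the three plethystic infinite products at arguments $z^k,z^{2k+1},z^{2k}$ must combine --- via a plethystic Cauchy identity together with the skew-shape relations $S_{\pi/(k)}=h_k^\perp S_\pi$ and $S_{\pi/(1^k)}=e_k^\perp S_\pi$ --- to cancel against the classical $(1-w/z)^{\pm 1}$ contributions and leave a residual prefactor with exactly the required antisymmetry (for $V_\pi V_\pi$ and $V_\pi^* V_\pi^*$) or simple pole with residue $1$ (for $V_\pi V_\pi^*$). The asymmetric choice of exponents $2k+1$ versus $2k$ in the definition of $V_\pi^*$, as opposed to the uniform $k$ in $V_\pi$, is engineered precisely so that these cancellations in the mixed product go through. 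Disentangling this plethystic alignment is where the principal computational effort resides.
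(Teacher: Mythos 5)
Your overall strategy --- normal-ordering the three products $X^\pi(z)X^\pi(w)$, $X^{*\pi}(z)X^{*\pi}(w)$, $X^\pi(z)X^{*\pi}(w)$, extracting a scalar prefactor, and reading off antisymmetry or a formal delta --- is the route taken in the original source \cite{FJK}, and your elementary Heisenberg OPEs such as $L^\perp(z^{-1})M(w)=(1-w/z)M(w)L^\perp(z^{-1})$ are correct. The difficulty is that the entire content of the theorem lives in the step you defer. Passing from those elementary OPEs to the commutation of the plethystic factors $L^\perp_{\pi/(k)}(z^k)$, $M^\perp_{\pi/(1^{2k+1})}(z^{2k+1})$, $L^\perp_{\pi/(k)}(z^{2k})$ past $M(w)$ and $L(w)$ is not a routine ``lift by plethystic substitution'': one needs the nontrivial identities (Appendix A of \cite{FJK}, quoted in Section 4 of this paper)
\begin{align*}
L^\perp_\pi(z)M(w)&=M(w)\prod_{k\geq 0}L^\perp_{\pi/(k)}(zw^{k}),\\
L^\perp_\pi(z)L(w)&=L(w)\prod_{k\geq 0}L^\perp_{\pi/(1^{2k})}(zw^{2k})\,M^\perp_{\pi/(1^{2k+1})}(zw^{2k+1}),
\end{align*}
and then a verification that the resulting doubly indexed products of skew plethystic series at arguments $z^kw^l$ telescope so as to leave exactly $(1-w/z)^{\pm1}$ and nothing else. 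You explicitly park this ``plethystic alignment'' as the place where ``the principal computational effort resides''; saying that the exponents $2k+1$ and $2k$ are ``engineered precisely so that these cancellations go through'' restates the theorem rather than proving it. As written, the proposal establishes nothing beyond the classical case $\pi=\emptyset$.

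It is also worth noting that the paper does not prove this theorem where it is stated (it is imported as Theorem 1 of \cite{FJK}); its effective justification appears in Section 4 and is much cheaper than a direct OPE computation. One defines $\psi^\pi_j:=L^\perp_\pi\psi_j(L^\perp_\pi)^{-1}$ and $\psi^{*\pi}_j:=L^\perp_\pi\psi^*_j(L^\perp_\pi)^{-1}$; being conjugates of the ordinary Fermions by a single invertible operator, these satisfy all three Clifford relations automatically, with no prefactor analysis whatsoever. The only thing left to check is that under the Boson--Fermion correspondence these conjugates coincide with the modes $X^\pi_j$, $X^{*\pi}_j$, i.e.\ that $L^\perp_\pi\,X(w)\,(L^\perp_\pi)^{-1}=X^\pi(w)$ and its dual analogue --- which is precisely the content of the two displayed identities above (the $k=0$ factor $L^\perp_{\pi/(0)}(w^{0})=L^\perp_\pi$ cancels the $(L^\perp_\pi)^{-1}$ on the right, leaving the products over $k>0$ that appear in $V_\pi$ and $V^*_\pi$). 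So either way you must prove those two plethystic commutation identities; once you have them, the conjugation argument delivers the Clifford relations for free, whereas your route still requires the additional cancellation bookkeeping. I would recommend restructuring your proof around the conjugation observation and concentrating all effort on the two identities.
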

\section{Schur function, vertex operator and KP hierarchy}\label{sect3}
Let $\C[{\bf x}]=\C[x_1,x_2,\cdots]$ be the polynomial ring of infinitely many variables.
Although the number of variables is infinite, each polynomial itself is a finite sum of monomials, so involves only finitely many of the variables.
Bosons are operators  $\{a_n\}_{n\in{\Z},n\neq 0}$ satisfying relations (\ref{hei}).
The representation of Bosons on $\C[{\bf x}]$ is $a_n=\frac{\partial}{\partial x_n}, a_{-n}=nx_n$ for $n>0$. Denote $\frac{\partial}{\partial x_n}$ by $\partial_n$. Define
\begin{equation}\label{pqn}
\text{exp}(\sum_{m\geq 1}x_m k^m)=\sum_{n\geq 0} P_{(n)} k^n,\quad \text{exp}(\sum_{m\geq 1}\frac{\partial_m}{m} k^m)=\sum_{n\geq 0} Q_{(n)} k^n
\end{equation}
when $i<0$, we set $P_{(i)}=0,\ Q_{(i)}=0$. For any $m$, $P_{(m)}$ is a polynomial of variables $x_1,x_2,\cdots, x_m$, $P_{(m)}=P_{(m)}(x_1,x_2,\cdots,x_m)$.
In fact,
Replacing
$
x_n$ with the power sum symmetric function $p_n$,
we get
$
 P_{(n)} =S_{(n)}
$,
where $S_{(n)}$ is the Schur polynomial of Young diagram $(n)$. Let ${\bf x}=(x_1,x_2,\cdots)$ and $x_n=p_n(\tilde { x}_1,\tilde x_2,\cdots)$. Therefore,
for any Young diagram $\lambda=(\lambda_1,\lambda_2,\cdots,\lambda_l)$,
\begin{equation}\label{ps}
P_{\lambda}({\bf x})=S_{\lambda}(\tilde {\bf x})=\text{det}(h_{\lambda_i-i+j}(\tilde{\bf x}))_{1\leq i,j\leq l}
\end{equation}
where $h_{n}(\tilde{\bf x})$ is the $n$th complete symmetric function, i.e.,
\[
h_{n}(\tilde{\bf x})=\sum_{i_1\leq\cdots\leq i_n}\tilde x_{i_1}\cdots\tilde x_{i_n}.
\]

In the following,
we do not distinguish Young diagram $\lambda$, $P_\lambda$ and $S_\lambda$. The actions of $P_\lambda$ and $Q_\lambda$ on Young diagram $\mu$ are defined to be\cite{WWY2,W}
\begin{equation}
P_\lambda\cdot\mu:=\lambda\cdot\mu,\ \ \ Q_\lambda\cdot\mu:={\mu/\lambda}.
\end{equation}
where the multiplication $\lambda\cdot\mu$ satisfies the Littlewood-Richardson rule.

Introduce the vertex operators
\begin{eqnarray}
V^{\pm}(k)=\sum_{n\in\Z}V_n^{\pm}k^n=\exp(\pm\sum_{n=1}^\infty x_nk^n)\exp(\mp\sum_{n=1}^\infty \frac{\partial_n}{n}k^{-n}).
\end{eqnarray}
The operator $X_n^+$ is a raising operators of the Schur function, i.e.,
\begin{equation}
S_\lambda({\bf \tilde x})=P_\lambda({\bf x})=V_{\lambda_1}^+V_{\lambda_2}^+\cdots V_{\lambda_l}^+\cdot1
\end{equation}
for a partition $\lambda=(\lambda_1,\lambda_2,\cdots,\lambda_l)$.

By Boson-Fermion correspondence, there are three vector spaces which are isomorphic to each other, the polynomial ring $\C[{\bf x}]=\C[x_1,x_2,\cdots]$ of infinitely many variables ${\bf x}=(x_1,x_2,\cdots)$ which is called the Bosonic Fock space, the charge zero part of the Fermionic Fock space $\mathcal{F}$ which is the vector space based by the set of Maya diagrams, and the vector space $Y$ based by the set of Young diagrams. Therefore a Maya diagram $|u\rangle$ can be written as $$|u\rangle=|\lambda, n\rangle=|P_\lambda, n\rangle$$
where $n$ is the charge of $|u\rangle$. In special case, if the charge $n=0$, we also write the Maya diagram  $|u\rangle$ as $|\lambda\rangle$.

Let $f(z,{\bf x})$ be a function in space
$\C[z,z^{-1},x_1,x_2,\cdots].$
Define operators
\begin{equation}
e^K f(z,{\bf x}):=zf(z,{\bf x}),\ \ k^{H_0}f(z,{\bf x}):=f(kz,{\bf x}).
\end{equation}
Define the generating functions\cite{MJD}
\begin{eqnarray}
X(k)&=&\sum_{j\in \Z+\frac{1}{2}}X_{j}k^{-j-\frac{1}{2}}=V^{+}(k)e^K k^{H_0},\label{xx1}\\
X^*(k)&=&\sum_{j\in \Z+\frac{1}{2}}X^*_{j}k^{-j-\frac{1}{2}}=V^{-}(k)e^K k^{H_0}.\label{xx2}
\end{eqnarray}
It can be checked that\begin{equation}
\{X_i,X_j\}=0,
\{X_i^*, X_j^*\}=0,
\{X_i,X_j^*\}=\delta_{i+j,0}.
\end{equation}

\begin{dfn}
For an unknown function $\tau=\tau({\bf x})$,  the bilinear equation
\begin{equation}
\sum_{j\in{\Z+\frac{1}{2}}}X^*_j\tau\otimes X_{-j}\tau=0
\end{equation}
is called the KP hierarchy.
\end{dfn}
\section{$\pi$-type Boson-Fermion correspondence}\label{sect4}
We begin this section by recalling the definition of Maya diagram. Let an increasing sequence of half-integers\cite{MJD}
\[
{\bf u}=\{u_j\}_{j\geq 1},\ \text{with}\ u_1<u_2<u_3<\cdots,
\]
satisfy $u_{j+1}=u_j+1$ for all sufficiently large $j$.  Putting a black stone on the position $u_j$ for all $j$ and a white stone on every other half-integer position, we get a Maya diagram and denote it by $|{\bf u}\rangle$. Specially, the Maya diagram $|1/2,3/2,5/2,\cdots\rangle$ is denoted by $|\text{vac}\rangle$.

Fermions $\psi_j, \psi^*_j,\ j\in\frac{1}{2}+\Z$ are operators satisfying \[
\{\psi_i,\psi_j\}=0,
\{\psi_i^*, \psi_j^*\}=0,
\{\psi_i,\psi_j^*\}=\delta_{i+j,0}.
\]
The actions of Fermions $\psi_j, \psi^*_j$ on Maya diagrams are determined by
\begin{eqnarray}
\psi_j|{\bf u}\rangle &=&  \begin{cases}
 (-1)^{i-1}| \cdots,u_{i-1},u_{i+1},\cdots\rangle & \text{ if } u_i=-j \ \text{for some} \ i, \\
  0& \text{ otherwise},
\end{cases}  \\
\psi^*_j |{\bf u}\rangle &=&  \begin{cases}
 (-1)^{i}| \cdots,u_{i},j,u_{i+1},\cdots\rangle & \text{ if } u_i<j<u_{i+1} \ \text{for some} \ i, \\
  0& \text{ otherwise}.
\end{cases}
\end{eqnarray}
The generating functions of Fermions are
\[
\psi(k)=\sum_{j\in\Z+1/2}\psi_jk^{-j-1/2},\ \psi^*(k)=\sum_{j\in\Z+1/2}\psi^*_jk^{-j-1/2}.
\]
The normal order is defined as usual. Let
\begin{equation}\label{hhx}
H_n=\sum_{j\in\Z+1/2}:\psi_{-j}\psi_{j+n}^*: \ \ \text{and}\ \ \ H(x)=\sum_{n=1}^\infty x_nH_n.\end{equation}
For Maya diagrams $|{\bf u}\rangle$ and $|{\bf v}\rangle$, the pair $\langle{\bf v}|{\bf u}\rangle$ is defined by the formula
\[
\langle{\bf v}|{\bf u}\rangle=\delta_{v_1+u_1,0}\delta_{v_2+u_2,0}\cdots.
\]

The Boson-Fermion correspondence is the correspondence $$\Phi:\mathcal{F}\rightarrow \C[z,z^{-1},x_1,x_2,x_3,\cdots]$$
given by\[
|u\rangle\mapsto\Phi(|u\rangle):=\sum_{l\in\Z}\langle l|\exp(H(x))|u\rangle
\]
which is an isomorphism of vector spaces, where $|l\rangle$ is the Maya diagram obtained from $|\text{vac}\rangle$ by sliding the diagram bodily $l$ steps to the right (that is, $-l$ steps to the left if $l<0$), and the operator $H(x)$ is defined in (\ref{hhx}).

Under Boson-Fermion correspondence, the Fermions $\psi_j, \psi^*_j$, respectively, correspond to the operators $X_j, X_j^*$ defined in equations (\ref{xx1}) and (\ref{xx2}).

Let $\lambda$ be a Young diagram, and $\lambda'$ be its conjugate. The Frobenius notation $\lambda=(n_1,\cdots,n_l|m_1,\cdots,m_l)$ describes the Young diagram $\lambda$ by $n_i=\lambda_i-i$, $m_i=\lambda'_i-i$, where $l$ is the number of the boxes in the NW-SE diagonal line of $\lambda$.

The Boson-Fermion correspondence tells us that the basis vector
\[
\psi_{n_1}\cdots\psi_{n_l}\psi^*_{m_1}\cdots\psi^*_{m_l}|\text{vac}\rangle \ \  \text{for} \ n_1<\cdots<n_l<0\ \text{and}\ m_1<\cdots<m_l<0
\]
of Fermionic Fock space of charge zero goes over into the Schur function $S_\lambda$ multiplied by $(-1)^{\sum_{i=1}^l(m_i+{1}/{2})+l(l-1)/2}$, where $\lambda=(-n_1-1/2,\cdots,-n_l-1/2|-m_1-1/2,\cdots,-m_l-1/2)$. Under the correspondence between Young diagrams and Maya diagrams, we can also write $S_\lambda$ as
\begin{equation}\label{slam}
S_\lambda({\bf x})=\langle \text{vac}|e^{H(x)}|\lambda \rangle
\end{equation}

In the following, we will define $\pi$-type Boson-Fermion correspondence, from which we will get $\pi$-type symmetric functions.
Define
\begin{equation}
L^\perp (z)=\exp(-\sum_{n\geq 1}\frac{H_n}{n}z^n)
\end{equation}
which corresponds to $L^\perp (z)$ defined in Section 2 by Boson-Fermion correspondence, that is why we use the same notation.

According to the definition of operators $P_\lambda({\bf x})$ in equations (\ref{pqn}) and (\ref{ps}), The operator $P_\lambda(H_n)$ is defined by replacing $x_i$ in $P_\lambda({\bf x})$ with $\frac{1}{i}H_{in}$,  which is a plethysm in fact. For example, $P_{(2)}({\bf x})=\frac{1}{2}x_1^2+x_2$, then $P_{(2)}(H_n)=\frac{1}{2}H_{n}^2+\frac{1}{2}H_{2n}$.

Define \begin{equation}
L^\perp_\pi(z)=\exp(-\sum_{n=1}^\infty\frac{z^n}{n}P_\pi(H_n))
\end{equation}
where $\pi$ is a Young diagram. In special case, if $\pi=\emptyset $, the operator $L^\perp_\pi(z)=1$; if $\pi=(0)$, the operator $L^\perp_\pi(z)=1-z$ and if $\pi=(1)$, the operator $L^\perp_\pi(z)=L^\perp(z)$. When $z=1$, we denote $L^\perp_\pi(z)$ by $L^\perp_\pi$.

\begin{dfn}
Define $\pi$-type Fermions by
\begin{eqnarray}
\psi_j^\pi&:=&L^\perp_\pi\psi_j(L^\perp_\pi)^{-1}\\
\psi_j^{*\pi}&:=&L^\perp_\pi\psi_j^*(L^\perp_\pi)^{-1}
\end{eqnarray}
\end{dfn}
It is easy to check that the $\pi$-type Fermions satisfy
\[
\{\psi_i^\pi,\psi_j^\pi\}=0,
\{\psi_i^{*\pi}, \psi_j^{*\pi}\}=0,
\{\psi_i^\pi,\psi_j^{*\pi}\}=\delta_{i+j,0}.
\]
\begin{prp}
Under Boson-Fermion correspondence, $\pi$-type Fermions $\psi_j^\pi,
\psi_j^{*\pi}$ correspond to $X^\pi_j, X_j^{*\pi}$ defined in Theorem 2.1 respectively. Therefore, the conclusion in Theorem 2.1 holds naturally.
\end{prp}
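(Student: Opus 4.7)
The proposition will follow once we verify two things: (i) the Fermionic operator $L^\perp_\pi$ corresponds under the Boson-Fermion correspondence $\Phi$ to its Bosonic namesake from Section \ref{sect2}, and (ii) conjugation of the classical vertex operator $X(z)$ by this $L^\perp_\pi$ reproduces the plethystic vertex operator $X^\pi(z)$ of Theorem 2.1, and similarly for $X^*(z)$.

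First, I would establish (i) as follows. The classical Boson-Fermion correspondence identifies the positive modes $H_n$ ($n>0$) with $\partial_{x_n}$, so $\tfrac{1}{i}H_{in}$ corresponds to $\tfrac{1}{i}\partial_{x_{in}}$. Since $P_\pi$ is polynomial in its arguments, the substitution $x_i\mapsto\tfrac{1}{i}H_{in}$ used to define $P_\pi(H_n)$ on the Fermion side corresponds term-by-term to the analogous substitution on the Bosonic side, which is precisely the plethystic differential operator giving the Bosonic $L^\perp_\pi$. Exponentiating yields $\Phi L^\perp_\pi\Phi^{-1}=L^\perp_\pi$; the paper has already remarked on this for $\pi=(1)$, and the general case is the same argument applied to each monomial in $P_\pi$.

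Second, the classical correspondence gives $\Phi\psi_j\Phi^{-1}=X_j$ and $\Phi\psi_j^*\Phi^{-1}=X_j^*$, so conjugation yields
\[
\Phi\psi_j^\pi\Phi^{-1}=L^\perp_\pi X_j(L^\perp_\pi)^{-1},\qquad
\Phi\psi_j^{*\pi}\Phi^{-1}=L^\perp_\pi X_j^*(L^\perp_\pi)^{-1}.
\]
At the level of generating functions it therefore suffices to prove
\[
L^\perp_\pi X(z)(L^\perp_\pi)^{-1}=X^\pi(z),\qquad L^\perp_\pi X^*(z)(L^\perp_\pi)^{-1}=X^{*\pi}(z).
\]
Because $L^\perp_\pi$ is built entirely from positive-mode operators $H_n$ ($n>0$), it commutes with the zero-mode factors $e^{\pm K}$ and $z^{\pm H_0}$ and with the other positive-mode pieces $L^\perp(z^{-1})$ and $M^\perp(z^{-1})$ (all positive modes mutually commute). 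So the whole computation reduces to evaluating $L^\perp_\pi M(z)(L^\perp_\pi)^{-1}$ and $L^\perp_\pi L(z)(L^\perp_\pi)^{-1}$.

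Third, these two conjugations I would compute by BCH/normal-ordering: the commutators $[P_\pi(H_n),H_{-m}]$, coming from the canonical bracket $[H_n,H_{-m}]=n\delta_{n,m}$, produce contractions which, translated under $\Phi$ into symmetric-function language, are exactly plethystic Pieri operations sending $S_\pi$ to $S_{\pi/(k)}$. Exponentiating and resumming in powers of $z$ delivers the factor $\prod_{k>0}L^\perp_{\pi/(k)}(z^k)$ on the $M(z)$ side, matching the definition of $V_\pi(z)$; the same calculation for $L(z)$, where the contractions carry opposite signs and a distinct parity structure, delivers the additional factor $\prod_{k\geq 0}M^\perp_{\pi/(1^{2k+1})}(z^{2k+1})\prod_{k>0}L^\perp_{\pi/(k)}(z^{2k})$ appearing in $V_\pi^*(z)$.

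The main obstacle is the plethystic identity $L^\perp_\pi M(z)(L^\perp_\pi)^{-1}=M(z)\prod_{k>0}L^\perp_{\pi/(k)}(z^k)$ (and its starred counterpart); everything else is bookkeeping. In practice this identity is precisely the combinatorial core of Theorem 2.1 as proved in \cite{FJK}, so the cleanest route is to appeal directly to that theorem rather than rerun the BCH expansion. Once both generating-function equalities are in hand, the Clifford anticommutation relations for $\psi_j^\pi,\psi_j^{*\pi}$ transport under the isomorphism $\Phi$ to those for $X_j^\pi,X_j^{*\pi}$, so the conclusion of Theorem 2.1 follows immediately.
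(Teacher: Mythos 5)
Your proposal follows essentially the same route as the paper: identify the Fermionic $L^\perp_\pi$ with its Bosonic counterpart via $H_n\leftrightarrow\partial_n$, reduce the conjugation of $X(z)$ and $X^*(z)$ to the commutation of $L^\perp_\pi$ past $M(w)$ and $L(w)$, and appeal to Appendix A of \cite{FJK} for those plethystic exchange relations. The extra remarks about zero modes and the BCH sketch are fine but do not change the substance; the argument is correct as the paper intends it.
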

\begin{proof}
Under Boson-Fermion correspondence, the operator $H_n, n>0$ corresponds to $\partial_n$, then $L^\perp_\pi(z)$ corresponds to
\[
\exp(-\sum_{n=1}^\infty\frac{z^n}{n}P_\pi(\partial_n))
\]
which is the same as $L^\perp_\pi(z)$ defined in Section 2. In appendix A of \cite{FJK}, they have proved that
\begin{eqnarray*}
L^\perp_\pi(z)M(w)&=&M(w)\prod_{k\geq 0}L^\perp_{\pi/(k)}(zw^K),\\
L^\perp_\pi(z)L(w)&=&L(w)\prod_{k\geq 0}L^\perp_{\pi/(1^{2k})}(zw^{2k})M^\perp_{\pi/(1^{2k+1})}(zw^{2k+1})
\end{eqnarray*}
where $$M(z)=\exp(\sum_{n=1}^\infty {x_n}z^n)$$ appeared in $X(z)$ and $L(z)=(M(z))^{-1}$ appeared in $X^*(z)$, and we know that Fermions $\psi_j$ and $\psi_j^*$ correspond to $X_j$ and $X^*_j$ under Boson-Fermion correspondence, respectively.
Then we obtain the conclusion.
\end{proof}

In the following, we will generalize the Boson-Fermion correspondence to $\pi$-type, from which we can calculate $\pi$-type symmetric functions. It turns out that the classical Boson-Fermion correspondence is the special case $\pi=\emptyset$ of the $\pi$-type Boson-Fermion correspondence.

\begin{dfn}
Let $\mathcal{F}$ denote the Fermionic Fock space based by the set of Maya diagrams, define
\begin{equation}
\Phi_\pi: \mathcal{F}\rightarrow\C[z,z^{-1},x_1,x_2,\cdots]
\end{equation}
by \begin{equation}
\Phi_\pi(|u\rangle)=\sum_{l\in\Z}z^l\langle l|e^{H(x)}L^\perp_\pi|u\rangle
\end{equation}
where $|u\rangle$ is a Maya diagram.
\end{dfn}

\begin{prp}
The correspondence $\Phi_\pi:\mathcal{F}\rightarrow\C[z, z^{-1},x_1,x_2,\cdots]$ defined above is an isomorphism of vector spaces.
\end{prp}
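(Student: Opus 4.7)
The plan is to realize $\Phi_\pi$ as the composition of the classical Boson--Fermion correspondence $\Phi$ with the ``dressing'' operator $L^\perp_\pi$ acting on $\mathcal{F}$, and then to check that this dressing is itself a linear isomorphism. Since $L^\perp_\pi$ involves only the positive modes $H_n$ ($n\geq 1$), each of which commutes with the charge operator, $L^\perp_\pi$ preserves the charge decomposition of $\mathcal{F}$. Consequently, for any Maya diagram $|u\rangle$,
\[
\Phi_\pi(|u\rangle)=\sum_{l\in\Z}z^l\langle l|e^{H(x)}L^\perp_\pi|u\rangle=\Phi\bigl(L^\perp_\pi|u\rangle\bigr),
\]
so that $\Phi_\pi=\Phi\circ L^\perp_\pi$. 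Because $\Phi$ is already an isomorphism of vector spaces, it suffices to prove that $L^\perp_\pi:\mathcal{F}\to\mathcal{F}$ is a linear bijection.

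To verify this, I would equip $\mathcal{F}$ with its natural energy grading $\mathcal{F}=\bigoplus_{l\in\Z}\bigoplus_{m\geq 0}\mathcal{F}^{(m)}_l$, where $\mathcal{F}^{(m)}_l$ is spanned by the Maya diagrams indexed by partitions $\lambda$ of size $m$ in the charge-$l$ sector. Each positive mode $H_n$ sends $\mathcal{F}^{(m)}_l$ into $\mathcal{F}^{(m-n)}_l$ and annihilates vectors with $m<n$. Since $P_\pi$ is homogeneous of weight $|\pi|$ under the weighting $\deg x_i=i$, the plethystic substitution $x_i\mapsto\tfrac{1}{i}H_{in}$ makes $P_\pi(H_n)$ strictly energy-lowering by $n|\pi|$ whenever $\pi\neq\emptyset$. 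Hence the operator $A:=\sum_{n\geq 1}\tfrac{1}{n}P_\pi(H_n)$ is locally nilpotent: applied to any $|u\rangle\in\mathcal{F}^{(m)}_l$ only the finitely many indices $n$ with $n|\pi|\leq m$ contribute, and after sufficiently many iterations the available energy is exhausted.

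It follows that $L^\perp_\pi=\exp(-A)$ is well-defined on each Maya basis vector as a finite sum, and takes the form $\mathrm{Id}+N$ with $N$ strictly lowering the grading within every charge sector. Any such filtered-unipotent operator is automatically bijective, its inverse being the similarly well-defined $\exp(+A)$. Composing with $\Phi$ then yields the desired isomorphism $\Phi_\pi$; the degenerate case $\pi=\emptyset$, where $L^\perp_\pi=\mathrm{Id}$, simply recovers the classical correspondence. The point demanding most care is the energy-lowering claim for $P_\pi(H_n)$: one must check that the plethystic substitution $x_i\mapsto\tfrac{1}{i}H_{in}$ genuinely converts the homogeneity of $S_\pi$ into a pure grade shift of $n|\pi|$ in the Fermionic Fock space. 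Once this weight bookkeeping is in place, the local nilpotency and hence the invertibility of $L^\perp_\pi$ follow by formal manipulation.
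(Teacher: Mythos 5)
The paper states this proposition without any proof at all, so there is nothing to compare against line by line; what you have written is a complete and correct justification, and it is the natural one. The factorization $\Phi_\pi=\Phi\circ L^\perp_\pi$ is immediate from the definitions by linearity, and the real content is exactly where you put it: showing that $L^\perp_\pi$ is a well-defined bijection of $\mathcal{F}$. Your weight bookkeeping checks out in the paper's conventions: $H_n$ ($n>0$) corresponds to $\partial_n$ and moves black stones to the right, which in this paper's Maya--Young dictionary lowers $|\lambda|$ by $n$ within a fixed charge sector, and the plethystic substitution $x_i\mapsto\tfrac{1}{i}H_{in}$ applied to the $\deg x_i=i$ homogeneous polynomial $P_\pi$ (e.g.\ $P_{(2)}(H_n)=\tfrac12 H_n^2+\tfrac12 H_{2n}$, each term dropping the energy by $2n$) does yield a pure grade shift of $n|\pi|$. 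Hence $A=\sum_{n\geq1}\tfrac{1}{n}P_\pi(H_n)$ is locally nilpotent for $|\pi|\geq1$, $L^\perp_\pi=\exp(-A)$ is unipotent with inverse $M^\perp_\pi=\exp(A)$, and the proposition follows. One caveat you should make explicit: the argument (and the proposition itself) fails for the degenerate case $\pi=(0)$ that the paper lists among its examples, since there $|\pi|=0$, the series $A=\sum_{n\geq1}\tfrac1n$ is not locally nilpotent, and indeed $L^\perp_{(0)}=L^\perp_{(0)}(1)=1-1=0$ is not invertible; so the statement should be read as assuming $\pi=\emptyset$ or $\pi$ a nonempty genuine partition.
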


Under the correspondence between Maya diagrams and Young diagrams, we know that the charge zero Maya diagram  $|u\rangle$ corresponds to a Young diagram denoted by $\lambda$, and we denote $|u\rangle$ by $|\lambda\rangle$, then we get

\begin{prp}
For a Maya diagram $|\lambda\rangle$ which corresponds to the Young diagram $\lambda$, the $\pi$-type symmetric functions $S_\lambda^\pi({\bf x})$ can be obtained from
\begin{equation}
S_\lambda^\pi({\bf x})=\langle \text{vac}|e^{H(x)}L^\perp_\pi|\lambda\rangle
\end{equation}
\end{prp}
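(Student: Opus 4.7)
The plan is to deduce this formula as a direct consequence of the classical Boson--Fermion correspondence (equation~(\ref{slam})) together with the intertwining property that the Fermionic operator $L^\perp_\pi$ of Definition~4.1 (at $z=1$) corresponds, under the classical correspondence $\Phi$, to the symmetric--function operator $L^\perp_\pi({\bf x})$ of Section~\ref{sect2}. The two ingredients I rely on are (i)~the classical formula $S_\lambda({\bf x}) = \langle\text{vac}|e^{H(x)}|\lambda\rangle$ from~(\ref{slam}), and (ii)~the defining identity $S_\lambda^\pi({\bf x}) = L^\perp_\pi({\bf x})\,S_\lambda({\bf x})$ recalled at the end of Section~\ref{sect2}.

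Next I would establish the intertwining
\[
\Phi\bigl(L^\perp_\pi |v\rangle\bigr) \;=\; L^\perp_\pi({\bf x})\,\Phi(|v\rangle)\qquad \text{for all } |v\rangle\in\mathcal{F}.
\]
This uses only the standard fact, already invoked in the proof of Proposition~4.2, that for each $n\geq 1$ the Heisenberg operator $H_n$ on $\mathcal{F}$ corresponds under $\Phi$ to $\partial_n$ on the polynomial ring. Since $L^\perp_\pi = \exp\bigl(-\sum_{n\geq 1}\tfrac{1}{n}P_\pi(H_n)\bigr)$ involves only these positive modes, the same plethystic substitution $H_n\mapsto \partial_n$ carries it precisely to $L^\perp_\pi({\bf x}) = \exp\bigl(-\sum_{n\geq 1}\tfrac{1}{n}P_\pi(\partial_n)\bigr)$ as defined in Section~\ref{sect2}.

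Finally I would specialize $|v\rangle=|\lambda\rangle$. Each $H_n$ preserves the charge grading, so $L^\perp_\pi|\lambda\rangle$ remains of charge zero, and hence in the definition of $\Phi$ only the $l=0$ term survives:
\[
\Phi\bigl(L^\perp_\pi|\lambda\rangle\bigr) \;=\; \langle 0|e^{H(x)}L^\perp_\pi|\lambda\rangle \;=\; \langle\text{vac}|e^{H(x)}L^\perp_\pi|\lambda\rangle.
\]
On the other hand, the intertwining combined with ingredients~(i) and~(ii) gives
\[
\Phi\bigl(L^\perp_\pi|\lambda\rangle\bigr) \;=\; L^\perp_\pi({\bf x})\,\Phi(|\lambda\rangle) \;=\; L^\perp_\pi({\bf x})\,S_\lambda({\bf x}) \;=\; S_\lambda^\pi({\bf x}).
\]
Equating the two expressions yields the claim.

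The only step with genuine content is the intertwining in the second paragraph, but it is precisely the translation $H_n \leftrightarrow \partial_n$ (for $n>0$) that underlies the proof of Proposition~4.2; so no new obstacle arises. The only point requiring care is to verify that the plethysm $P_\pi(H_n)$ transports term by term to $P_\pi(\partial_n)$, which it does because $P_\pi$ is a fixed universal polynomial in the Heisenberg generators and the map $H_n\mapsto\partial_n$ is an algebra homomorphism on the positive--mode subalgebra.
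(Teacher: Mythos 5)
Your proposal is correct and is essentially the argument the paper intends: the proposition is stated without an explicit proof, immediately after Proposition 4.2 establishes (via $H_n\leftrightarrow\partial_n$ for $n>0$) that the Fermionic $L^\perp_\pi$ corresponds to the symmetric-function operator $L^\perp_\pi({\bf x})$ of Section 2, and your chain $\langle \text{vac}|e^{H(x)}L^\perp_\pi|\lambda\rangle = L^\perp_\pi({\bf x})\,S_\lambda({\bf x}) = S_\lambda^\pi({\bf x})$, together with the charge-zero observation that kills all $l\neq 0$ terms of $\Phi_\pi$, is exactly the intended reasoning. The paper's two sample computations of $S^{(2)}_{(2)}$ play the role of a consistency check rather than a proof, so your write-up supplies the missing details along the same route.
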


From the relations between $\psi_j^\pi,\psi_j^{*\pi}$ and $\psi_j,\psi_j^*$, we have
\begin{prp} If Young diagram $\lambda=(-n_1-1/2,\cdots,-n_l-1/2|-m_1-1/2,\cdots,-m_l-1/2)$ in the Frobenius notation, then $S_\lambda^\pi({\bf x})$ can be obtained from
\[
\langle \text{vac}|e^{H(x)}\psi^\pi_{n_1}\cdots\psi^\pi_{n_l}\psi^{*\pi}_{m_1}\cdots\psi^{*\pi}_{m_l}|\text{vac}\rangle \ \  \text{for} \ n_1<\cdots<n_l<0\ \text{and}\ m_1<\cdots<m_l<0
\]
 by multiplying $(-1)^{\sum_{i=1}^l(m_i+{1}/{2})+l(l-1)/2}$.
\end{prp}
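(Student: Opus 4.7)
The plan is to combine the previous proposition with the classical Frobenius description of $|\lambda\rangle$ as a string of Fermion operators acting on the vacuum, and then transport $L^\perp_\pi$ through the Fermion product using the very definition of the $\pi$-type Fermions as conjugates.

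First I would start from
$$S_\lambda^\pi({\bf x}) = \langle\text{vac}|e^{H(x)}L^\perp_\pi|\lambda\rangle,$$
which is the content of the preceding proposition. Next I would recall the Frobenius decomposition stated earlier in this section: under the classical Boson-Fermion correspondence the charge-zero Fermionic basis vector
$$\psi_{n_1}\cdots\psi_{n_l}\psi^*_{m_1}\cdots\psi^*_{m_l}|\text{vac}\rangle$$
is sent to $(-1)^{s}S_\lambda$, where $s=\sum_{i=1}^l(m_i+1/2)+l(l-1)/2$. Since the correspondence is an isomorphism, the Maya diagram satisfies
$$|\lambda\rangle = (-1)^{s}\,\psi_{n_1}\cdots\psi_{n_l}\psi^*_{m_1}\cdots\psi^*_{m_l}|\text{vac}\rangle,$$
and substitution yields
$$S_\lambda^\pi({\bf x}) = (-1)^{s}\,\langle\text{vac}|e^{H(x)}L^\perp_\pi\psi_{n_1}\cdots\psi_{n_l}\psi^*_{m_1}\cdots\psi^*_{m_l}|\text{vac}\rangle.$$

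The core of the argument is then a telescoping: I would insert a factor of $(L^\perp_\pi)^{-1}L^\perp_\pi=1$ between each adjacent pair of Fermions in the product. By the definitions $\psi_j^\pi = L^\perp_\pi\psi_j(L^\perp_\pi)^{-1}$ and $\psi_j^{*\pi} = L^\perp_\pi\psi_j^*(L^\perp_\pi)^{-1}$, every insertion converts a classical Fermion into its $\pi$-type counterpart, leaving a single trailing factor $L^\perp_\pi$ that acts directly on $|\text{vac}\rangle$ at the right end.

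The only substantive point, and the one place I expect any real care to be needed, is to verify that $L^\perp_\pi|\text{vac}\rangle = |\text{vac}\rangle$. This follows because $L^\perp_\pi = \exp\!\bigl(-\sum_{n\geq 1}\tfrac{1}{n}P_\pi(H_n)\bigr)$, and by construction $P_\pi(H_n)$ is obtained from the polynomial $P_\pi(x_1,x_2,\ldots)$ by the substitution $x_i\mapsto \tfrac{1}{i}H_{in}$, so it is a polynomial in the strictly positive-mode Heisenberg generators $\{H_k\}_{k\geq 1}$. Since $H_k|\text{vac}\rangle=0$ for every $k\geq 1$, the exponential reduces to the identity on the vacuum. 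With this, the expression collapses to
$$S_\lambda^\pi({\bf x}) = (-1)^{s}\,\langle\text{vac}|e^{H(x)}\psi_{n_1}^\pi\cdots\psi_{n_l}^\pi\psi^{*\pi}_{m_1}\cdots\psi^{*\pi}_{m_l}|\text{vac}\rangle,$$
which is exactly the claimed formula, the sign $(-1)^{s}$ being absorbed into the ``multiplication by $(-1)^{\sum_{i=1}^l(m_i+1/2)+l(l-1)/2}$'' prescribed in the statement.
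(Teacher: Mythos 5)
Your proof is correct and follows exactly the route the paper intends: the paper itself gives no details beyond the remark ``From the relations between $\psi_j^\pi,\psi_j^{*\pi}$ and $\psi_j,\psi_j^*$,'' and your telescoping of $(L^\perp_\pi)^{-1}L^\perp_\pi$ through the Fermion string, combined with the preceding proposition and the observation that $L^\perp_\pi|\text{vac}\rangle=|\text{vac}\rangle$ since $P_\pi(H_n)$ involves only positive modes, is precisely the missing computation.
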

Take an example, we will calculate $S_{\begin{tikzpicture}
\draw [step=0.15](0,0) grid(0.3,0.15);
\end{tikzpicture}}^{(2)}$ in two ways. We will need the actions of $H_n$ on Maya diagrams.
From the actions of Fermions on Maya diagrams, we get
the action of $H_1$ on Maya diagram, that is $H_1$ sending a Maya diagram $|{\bf{u}}\rangle$ to
the sum over all Maya diagrams who can be obtained from $|{\bf{u}}\rangle$ by moving a black stone  to the right.
Define
\begin{equation}\label{hh}
\text{exp}(\sum_{m\geq 1}\frac{H_m}{m} k^m)=\sum_{n\geq 0} Q_{(n)} k^n,\ \ \text{exp}(\sum_{m\geq 1}\frac{H_{-m}}{m} k^m)=\sum_{n\geq 0} P_{(n)} k^n
\end{equation}
The action of $Q_{(m)}$ on Maya diagram is that $Q_{(m)}$ sending a Maya diagram $|{\bf{u}}\rangle$ to
the sum over all Maya diagrams who can be obtained from $|{\bf{u}}\rangle$ by moving black stones $n$ times to the right and no one black stone is moved twice. and $Q_{(1^m)}$ sends a Maya diagram $|{\bf{u}}\rangle$ to
the sum over all Maya diagrams who can be obtained from $|{\bf{u}}\rangle$ by moving black stones $n$ times to the right and no two adjacent black stones move at the same time\cite{Wang,W}. The actions of $P_\lambda$ on Maya diagram is similar to that of $Q_\lambda$ on Maya diagram but shifting the black stones to the left. Then
the action of $H_m$ on a Maya diagram can be obtained from the actions of $P_n, Q_n$ on this Maya diagram, and we get that the action of $H_m$ on Maya diagram is moving black stones $|m|$ times, to the right if $m>0$ and to the left if $m<0$. Then we can calculate  $S_{\begin{tikzpicture}
\draw [step=0.15](0,0) grid(0.3,0.15);
\end{tikzpicture}}^{(2)}$.
The first way, since
\begin{eqnarray*}
\psi^{(2)}_{-3/2}&=&\psi_{-3/2}-\psi_{1/2}-\psi_{-1/2}Q+\psi_{3/2}Q+\cdots\\
\psi^{*(2)}_{-1/2}&=&\psi^*_{-1/2}+\psi^*_{1/2}Q+\psi^*_{3/2}Q_{2}+\cdots
\end{eqnarray*}
then
\begin{eqnarray*}
S_{\begin{tikzpicture}
\draw [step=0.15](0,0) grid(0.3,0.15);
\end{tikzpicture}}^{(2)} &=&\langle\text{vac}|e^{H(x)}\psi_{-3/2}^{(2)}\psi^{*(2)}_{-1/2}|\text{vac}\rangle\\
&=&\langle\text{vac}|e^{H(x)}(\psi_{-3/2}-\psi_{1/2})\psi^*_{-1/2}|\text{vac}\rangle\\
&=& S_{\begin{tikzpicture}
\draw [step=0.15](0,0) grid(0.3,0.15);
\end{tikzpicture}}-S_0=\frac{1}{2}x_1^2+x_2-1.
\end{eqnarray*}
and in the second way, we know that for Maya diagram
\[
\begin{tikzpicture}
\draw (-3.4,-0.1) node{$|\gamma\rangle=$};
\fill (0,0) circle(0.1);
\draw (0,-0.5) node{$\frac{1}{2}$};
\draw (0.5,0) circle(0.1);
\draw (0.5,-0.5) node{$\frac{3}{2}$};
\fill (1,0) circle(0.1);
\draw (1,-0.5) node{$\frac{5}{2}$};
\fill (1.5,0) circle(0.1);
\draw (1.5,-0.5) node{$\frac{7}{2}$};
\fill (2,0) circle(0.1);
\draw (2,-0.5) node{$\frac{9}{2}$};
\draw (2.5,0) node{$\cdots$};
\fill (-0.5,0) circle(0.1);
\draw (-0.6,-0.5) node{$-\frac{1}{2}$};
\draw (-1,0) circle(0.1);
\draw (-1.1,-0.5) node{$-\frac{3}{2}$};
\draw (-1.5,0) circle(0.1);
\draw (-1.6,-0.5) node{$-\frac{5}{2}$};
\draw (-2,0) circle(0.1);
\draw (-2.1,-0.5) node{$-\frac{7}{2}$};
\draw (-2.5,0) node{$\cdots$};

\end{tikzpicture}
\]
we have that $H_m\gamma=0$ when $m>2$. Then
\begin{eqnarray*}
S_{\begin{tikzpicture}
\draw [step=0.15](0,0) grid(0.3,0.15);
\end{tikzpicture}}^{(2)}&=&\langle\text{vac}|e^{H(x)}e^{-\sum_{n=1}^\infty\frac{1}{2n}(H_n^2+H_{2n})}\psi_{-3/2}\psi_{-1/2}|\text{vac}\rangle\\
&=&\langle\text{vac}|e^{H(x)}(1-\frac{1}{2}(H_1^2+H_2))|\gamma\rangle\\
&=&\langle\text{vac}|e^{H(x)}(1-Q_2)|\gamma\rangle\\
&=& S_{\begin{tikzpicture}
\draw [step=0.15](0,0) grid(0.3,0.15);
\end{tikzpicture}}-S_0=\frac{1}{2}x_1^2+x_2-1.
\end{eqnarray*}
\section{$\pi$-type KP hierarchy}\label{sect5}
In this section, we will define the $\pi$-type KP hierarchy and discuss its tau functions.
\begin{dfn}
For an unknown charge zero state $|u\rangle$ in $\mathcal{F}$, the bilinear equation
\begin{equation}\label{pikp0}
\sum_{j\in\Z+\frac{1}{2}}\psi^{*\pi}_j|u\rangle\otimes\psi^{\pi}_{-j}|u\rangle=0
\end{equation}
is called the $\pi$-type KP hierarchy.
\end{dfn}
Under Boson-Fermion correspondence, this definition can be written into
\begin{dfn}
For an unknown function $\tau=\tau({\bf x})$, the bilinear equation
\begin{equation}\label{pikp}
\sum_{j\in\Z+\frac{1}{2}}X^{*\pi}_j\tau\otimes X^{\pi}_{-j}\tau=0
\end{equation}
is called the $\pi$-type KP hierarchy.

\end{dfn}

We write $V_\pi(z)=\sum_{n\in \Z}V^\pi_nz^n$ and $V_\pi^*(z)=\sum_{n\in\Z}V^{*\pi}_nz^n$. It can be check that $V^\pi_n$ and $V^{*\pi}_m$ satisfy
$V^\pi_nV^\pi_m+V^\pi_{m-1}V^\pi_{n+1}=0$, $V^{*\pi}_nV^{*\pi}_m+V^{*\pi}_{m-1}V^{*\pi}_{n+1}=0$ and $V^\pi_nV^{*\pi}_m+V^{*\pi}_{m+1}V^\pi_{n-1}=\delta_{n+m,0}$. From the relations between $V^\pi_n, V^{*\pi}_n$ and $X^\pi_n, X^{*\pi}_n$, the equation (\ref{pikp}) can be rewritten into
\begin{equation}\label{pikp1}
\sum_{n+m=-1}V^{*\pi}_n\tau\otimes V^{\pi}_m\tau=0
\end{equation}

Suppose $\tau=\tau({\bf x})$ is a solution of $\pi$-type KP hierarchy (\ref{pikp1}), then $V^\pi(\alpha)\tau$ solves (\ref{pikp1}) again with an arbitrary constant $\alpha\in\C^\times$.

In the following, we will discuss the differential equations in the $\pi$-type KP hierarchy and their solutions. From the relations between Fermions and $\pi$-type Fermions, the equation (\ref{pikp0}) can be rewritten into
\[
\sum_{j\in\Z+\frac{1}{2}} L^\perp_\pi\psi_j^*(L^\perp_\pi)^{-1}\tau\otimes L^\perp_\pi\psi_{-j}(L^\perp_\pi)^{-1}\tau=0
\]
Multiplied by $(L^\perp_\pi)^{-1}\otimes (L^\perp_\pi)^{-1}$, the equation above turns into
\begin{equation}\label{pikp2}
\sum_{j\in\Z+\frac{1}{2}}\psi^*_j M^\perp_\pi\tau\otimes \psi_{-j}M^\perp_\pi\tau=0
\end{equation}
From this, we can obtain
\begin{prp}\label{relation}
 If $\tau$ is a solution of the $\pi$-type KP hierarchy, then $M^\perp_\pi\tau$ is a solution of KP hierarchy. If $\tau$ is a solution of KP hierarchy, then $L^\perp_\pi\tau$ is a solution of $\pi$-type KP hierarchy.
 \end{prp}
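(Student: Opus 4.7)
The plan is to read equation (\ref{pikp2}) as the central identity: it is precisely the classical Fermionic bilinear KP identity applied to the state $M^\perp_\pi\tau$. Both implications of the proposition then amount to observing that the derivation leading from (\ref{pikp0}) to (\ref{pikp2}) is fully reversible, since it consists only of substituting the definition of $\pi$-type Fermions and multiplying by an invertible operator. The only non-cosmetic prerequisite is the identification $(L^\perp_\pi)^{-1} = M^\perp_\pi$, which follows from the product formulas of Section \ref{sect2} showing $L_\pi(z)M_\pi(z)=1$, and hence $L^\perp_\pi M^\perp_\pi = \id$ by duality.

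First I would handle the forward direction. Assume $\tau$ solves the $\pi$-type KP hierarchy (\ref{pikp0}). Substitute $\psi_j^{*\pi} = L^\perp_\pi \psi_j^* (L^\perp_\pi)^{-1}$ and $\psi_{-j}^\pi = L^\perp_\pi \psi_{-j} (L^\perp_\pi)^{-1}$ into (\ref{pikp0}); then apply $(L^\perp_\pi)^{-1}\otimes (L^\perp_\pi)^{-1}$ to both tensor factors, which cancels the outermost $L^\perp_\pi$ in each slot. One obtains exactly (\ref{pikp2}), and, using $(L^\perp_\pi)^{-1}=M^\perp_\pi$, this is the classical KP bilinear identity for the tau function $M^\perp_\pi \tau$ under the ordinary Boson-Fermion correspondence. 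Hence $M^\perp_\pi\tau$ solves the KP hierarchy.

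For the reverse direction, every operation in the preceding derivation is invertible: applying $L^\perp_\pi\otimes L^\perp_\pi$ to (\ref{pikp2}) reproduces (\ref{pikp0}). So, given any KP solution $\sigma$, set $\tau := L^\perp_\pi \sigma$. Then $M^\perp_\pi \tau = M^\perp_\pi L^\perp_\pi \sigma = \sigma$ solves KP, so (\ref{pikp2}) holds with $\tau$ in place of $\tau$ in the bridge form; reversing the conjugation step then yields (\ref{pikp0}) for $\tau$, i.e., $\tau = L^\perp_\pi\sigma$ solves the $\pi$-type KP hierarchy.

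The main (and essentially the only) obstacle is verifying the inversion identity $(L^\perp_\pi)^{-1} = M^\perp_\pi$ in the Heisenberg representation, together with the compatibility between multiplication by this operator on the Bosonic Fock space and its Fermionic counterpart. These both follow cleanly from the exponential formulas $L^\perp_\pi(z) = \exp(-\sum_{n\ge 1} z^n P_\pi(H_n)/n)$ and the fact that under the Boson-Fermion correspondence $H_n$ acts as $\partial_n$ for $n>0$, so that the $M^\perp_\pi$ used in the paper's derivation of (\ref{pikp2}) coincides with $(L^\perp_\pi)^{-1}$. Once this is in place, the proof is nothing more than inserting definitions and tracking an invertible conjugation.
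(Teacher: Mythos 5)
Your proposal is correct and follows essentially the same route as the paper: the text derives equation (\ref{pikp2}) by substituting $\psi_j^{*\pi}=L^\perp_\pi\psi_j^*(L^\perp_\pi)^{-1}$ into (\ref{pikp0}) and applying $(L^\perp_\pi)^{-1}\otimes(L^\perp_\pi)^{-1}=M^\perp_\pi\otimes M^\perp_\pi$, then reads off both implications from the reversibility of these steps, exactly as you do. Your explicit verification of $(L^\perp_\pi)^{-1}=M^\perp_\pi$ from the exponential formula is a detail the paper leaves implicit, but it is the same argument.
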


 From (\ref{ml1}), we have
 \[
 M^\perp_\pi=\sum_{n\geq 0}(S_{(n)}[S_\pi]({\bf x}))^\perp
 \]
 From the definition of plethysm, the relation $S_{(n)}[S_\pi]=\sum_{\lambda}a_{(n)\pi}^\lambda S_\lambda$ holds for $|\lambda|=n\cdot|\pi|$, and it has been proved that $a_{(n)\pi}^\lambda$ are nonnegative integers\cite{Mac}. Then
 \[
 M^\perp_\pi=\sum_{n\geq 0}\sum_\lambda a_{(n)\pi}^\lambda (S_\lambda)^\perp=\sum_{n\geq 0}\sum_\lambda a_{(n)\pi}^\lambda Q_\lambda
 \]
 The action of $P_\lambda$ on Maya diagram is clearly known\cite{Wang}. Let $\lambda=(\lambda_1,\lambda_2,\cdots,\lambda_k)$ be a Young diagram. The action of $P_\lambda$ on Maya diagram $|{\bf{u}}\rangle$ includes $k$ steps corresponding to $k$ in $\lambda$. The first step is $P_{(\lambda_1)}$ acting on $|{\bf{u}}\rangle$ which we have introduced above, and the position, where the black stone is moved, is labeled by $1$; The second step is $P_{(\lambda_2)}$ acting on all the Maya diagrams obtained from $P_{(\lambda_1)}\cdot|{\bf{u}}\rangle$ and the position, where the black stone is moved, is labeled by $2$; Continuing until the $k$th step, the operator $P_{(\lambda_k)}$ acts on all the Maya diagrams obtained from $P_{(\lambda_{k-1})}\cdots P_{(\lambda_2)}P_{(\lambda_1)}\cdot|{\bf{u}}\rangle$, and the position, where the black stone is moved, is labeled by $k$. We define $P_\lambda$ sending Maya diagram $|{\bf{u}}\rangle$ to the sum over all Maya diagrams obtained from $k$ steps above and satisfied the following situation: from right to left, one looks at the first $l$ entries in the list (for any $l$ between $1$ and $\lambda_1+\lambda_2+\cdots+\lambda_k$), each integer $p$ between $1$ and $k-1$ occurs at least as many times as the next integer $p+1$.

 Choose $\tau$ in equation (\ref{pikp2}) in the form of linear combination of all charge zero Maya diagrams $\tau=\sum_{|u\rangle}c(|u\rangle)|u\rangle$, where the coefficient $c(|u\rangle)\in\C$. Let $|\mu\rangle$ be a Maya diagram of charge $1$ and $|\nu\rangle$ of charge $-1$, the coefficient of $|\mu\rangle\otimes |\nu\rangle$ in equation (\ref{pikp2}) is zero. From this, we will get many differential equations whose solutions include $\pi$-type symmetric functions.
\begin{prp} In $\pi$-type KP hierarchy, the tau function $\tau$ is a solution if and only if the coefficients $c(|u\rangle)$ in $\tau=\sum_{|u\rangle}c(|u\rangle)|u\rangle$ satisfy Pl\"uker relations, i.e.,
the following equation holds for any $|\mu\rangle$ and $|\nu\rangle$ whose charges are $1$ and $-1$ respectively
 \begin{equation}\label{pieq1}
\sum_{j}(-1)^j\sum_{(n),\lambda}\sum_{(m),\lambda'}a_{(n)\pi}^\lambda a_{(m)\pi}^{\lambda'}c(P_\lambda|\mu-\mu_j\rangle)c(P_{\lambda'}|\nu+\mu_j\rangle)=0
\end{equation}
here, Maya diagrams are signed Maya diagrams whose definition can be found in \cite{MJD}.
\end{prp}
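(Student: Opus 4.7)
The plan is to start from the equivalent form (\ref{pikp2}) of the $\pi$-type KP hierarchy and read off the coefficient of a generic basis tensor $|\mu\rangle\otimes|\nu\rangle$ in $\mathcal{F}\otimes\mathcal{F}$ with charges $+1$ and $-1$. Substituting $\tau=\sum_{|u\rangle}c(|u\rangle)|u\rangle$ and using the expansion
\begin{equation*}
M^\perp_\pi=\sum_{n\geq 0}\sum_\lambda a_{(n)\pi}^\lambda Q_\lambda
\end{equation*}
derived just before the proposition, (\ref{pikp2}) becomes
\begin{equation*}
\sum_{j\in\Z+\frac{1}{2}}\sum_{u,u'}c(|u\rangle)c(|u'\rangle)\Bigl(\sum_{n,\lambda}a_{(n)\pi}^\lambda\,\psi^*_j Q_\lambda|u\rangle\Bigr)\otimes\Bigl(\sum_{m,\lambda'}a_{(m)\pi}^{\lambda'}\,\psi_{-j}Q_{\lambda'}|u'\rangle\Bigr)=0,
\end{equation*}
and the remaining task is to project onto each $|\mu\rangle\otimes|\nu\rangle$.

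For the first tensor factor, $\psi^*_j$ contributes to $|\mu\rangle$ only when $j$ equals the position of one of the black stones of $\mu$; writing $\mu_j$ for that $j$-th stone, the Fermionic anticommutation relations yield the sign $(-1)^j$ and require $Q_\lambda|u\rangle$ to contain $|\mu-\mu_j\rangle$ as a summand. Summing over $u$ collapses this to
\begin{equation*}
\sum_u c(|u\rangle)\,\langle\mu-\mu_j|Q_\lambda|u\rangle=c(P_\lambda|\mu-\mu_j\rangle),
\end{equation*}
where $c$ is extended linearly to formal sums of Maya diagrams. The equality uses the adjointness $\langle v|Q_\lambda|u\rangle=\langle P_\lambda v|u\rangle$, which transports the standard Hall-pairing identity $\langle S_\lambda^\perp f,g\rangle=\langle f,S_\lambda g\rangle$ to the Fermionic picture via the Boson-Fermion correspondence, and it follows from the orthonormality $\langle v|u\rangle=\delta_{u,v}$ of the Maya basis. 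An identical computation applied to $\psi_{-j}Q_{\lambda'}$ in the second slot produces $c(P_{\lambda'}|\nu+\mu_j\rangle)$.

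Multiplying these two reductions, combining the structure constants $a_{(n)\pi}^\lambda a_{(m)\pi}^{\lambda'}$ with the Fermionic sign, and summing over $j$, the coefficient of $|\mu\rangle\otimes|\nu\rangle$ in (\ref{pikp2}) becomes exactly the left-hand side of (\ref{pieq1}). Since the $|\mu\rangle\otimes|\nu\rangle$ with charges $+1$ and $-1$ form a basis of the corresponding subspace of $\mathcal{F}\otimes\mathcal{F}$, the $\pi$-type KP hierarchy (\ref{pikp2}) holds if and only if (\ref{pieq1}) holds for every admissible $\mu,\nu$. As a sanity check, specializing to $\pi=\emptyset$ collapses the inner sums to $\lambda=\lambda'=\emptyset$ and recovers the classical Pl\"ucker relations of the ordinary KP hierarchy.

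The main obstacle is sign bookkeeping: two distinct sign sources must be reconciled, namely the Fermionic $(-1)^j$ arising from anticommutation of $\psi^*_j$ and $\psi_{-j}$ with already-present operators, and the signs carried by the signed Maya diagrams that appear once $P_\lambda$ acts on $|\mu-\mu_j\rangle$ through the Littlewood-Richardson expansion. The adjointness identity $\langle v|Q_\lambda|u\rangle=\langle P_\lambda v|u\rangle$ is a secondary but genuine step: I would verify it on the Schur basis using $Q_\lambda S_\mu=S_{\mu/\lambda}$ and $P_\lambda S_\mu=\sum_\nu c^\nu_{\lambda\mu}S_\nu$, observing that both sides produce the same Littlewood-Richardson coefficient, and then transport the identity to Maya diagrams via the Boson-Fermion correspondence.
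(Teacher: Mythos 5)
Your proposal is correct and follows essentially the same route as the paper: the paper's own (very brief) argument consists precisely of substituting the expansion $M^\perp_\pi=\sum_{n\geq 0}\sum_\lambda a_{(n)\pi}^\lambda Q_\lambda$ into equation (\ref{pikp2}), expanding $\tau$ over the Maya-diagram basis, and setting the coefficient of each $|\mu\rangle\otimes|\nu\rangle$ to zero, with the passage from $Q_\lambda$ acting on $|u\rangle$ to $P_\lambda$ acting on $|\mu-\mu_j\rangle$ handled by exactly the adjointness you invoke. Your write-up merely makes explicit the details (the projection, the duality $\langle v|Q_\lambda|u\rangle=\langle P_\lambda v|u\rangle$, and the sign bookkeeping inherited from the classical Pl\"ucker computation) that the paper leaves implicit.
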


 When $\pi=\emptyset$, the equation (\ref{pieq1}) turns into
 \begin{equation}\label{piempt}
\sum_{j}(-1)^jc(|\mu-\mu_j\rangle)c(|\nu+\mu_j\rangle)=0
\end{equation}
which is the Pl\"uker relations of KP hierarchy (the equation (10.3)  in \cite{MJD}).

 When $\pi=(1)=\begin{tikzpicture}
\draw [step=0.2](0,0) grid(0.2,0.2);
\end{tikzpicture}$, the equation (\ref{pieq1}) turns into
 \begin{equation}
\sum_{j}(-1)^j\sum_{n,m\geq 0}c(P_n|\mu-\mu_j\rangle)c(P_m|\nu+\mu_j\rangle)=0
\end{equation}
which in fact is the same as (\ref{piempt}).

When $\pi=(2)=\begin{tikzpicture}
\draw [step=0.2](0,0) grid(0.4,0.2);
\end{tikzpicture}$, since $a_{(n)(2)}^\lambda =1$ if and only if $\lambda$ is an even partition ($\lambda_i$ are even numbers) of $2n$, otherwise $a_{(n)(2)}^\lambda =0$, then the equation (\ref{pieq1}) turns into
\begin{equation}\label{pi22}
\sum_{j}(-1)^j\sum_{\lambda,\lambda' \text{even}}c(P_\lambda|\mu-\mu_j\rangle)c(P_{\lambda'}|\nu+\mu_j\rangle)=0.
\end{equation}
 In special case, let \begin{equation}
\begin{tikzpicture}
\draw (-3.4,-0.1) node{$\mu=$};
\fill (0,0) circle(0.1);
\draw (0,-0.5) node{$\frac{1}{2}$};
\fill (0.5,0) circle(0.1);
\draw (0.5,-0.5) node{$\frac{3}{2}$};
\fill (1,0) circle(0.1);
\draw (1,-0.5) node{$\frac{5}{2}$};
\fill (1.5,0) circle(0.1);
\draw (1.5,-0.5) node{$\frac{7}{2}$};
\fill (2,0) circle(0.1);
\draw (2,-0.5) node{$\frac{9}{2}$};
\draw (2.5,0) node{$\cdots$};
\fill (-0.5,0) circle(0.1);
\draw (-0.6,-0.5) node{$-\frac{1}{2}$};
\draw (-1,0) circle(0.1);
\draw (-1.1,-0.5) node{$-\frac{3}{2}$};
\draw (-1.5,0) circle(0.1);
\draw (-1.6,-0.5) node{$-\frac{5}{2}$};
\draw (-2,0) circle(0.1);
\draw (-2.1,-0.5) node{$-\frac{7}{2}$};
\draw (-2.5,0) node{$\cdots$};

\end{tikzpicture}
\end{equation}
and\begin{equation}
\begin{tikzpicture}
\draw (-3.4,-0.1) node{$\nu=$};
\draw (0,0) circle(0.1);
\draw (0,-0.5) node{$\frac{1}{2}$};
\draw (0.5,0) circle(0.1);
\draw (0.5,-0.5) node{$\frac{3}{2}$};
\fill (1,0) circle(0.1);
\draw (1,-0.5) node{$\frac{5}{2}$};
\fill (1.5,0) circle(0.1);
\draw (1.5,-0.5) node{$\frac{7}{2}$};
\fill (2,0) circle(0.1);
\draw (2,-0.5) node{$\frac{9}{2}$};
\draw (2.5,0) node{$\cdots$};
\draw (-0.5,0) circle(0.1);
\draw (-0.6,-0.5) node{$-\frac{1}{2}$};
\fill (-1,0) circle(0.1);
\draw (-1.1,-0.5) node{$-\frac{3}{2}$};
\draw (-1.5,0) circle(0.1);
\draw (-1.6,-0.5) node{$-\frac{5}{2}$};
\draw (-2,0) circle(0.1);
\draw (-2.1,-0.5) node{$-\frac{7}{2}$};
\draw (-2.5,0) node{$\cdots$};

\end{tikzpicture}
\end{equation}
the equation (\ref{pi22}) equals
\begin{equation}
\sum_{\lambda,\lambda' \text{even}}(c(\lambda)c(\lambda'\cdot\begin{tikzpicture}
\draw [step=0.15](0,0) grid(0.3,0.3);
\end{tikzpicture})-c(\lambda\cdot\begin{tikzpicture}
\draw [step=0.15](0,0) grid(0.15,0.15);
\end{tikzpicture})c(\lambda'\cdot\begin{tikzpicture}
\draw [step=0.15](0,0) grid(0.3,0.15);
\draw [step=0.15](0,-0.15) grid(0.15,0);
\end{tikzpicture})+c(\lambda\cdot\begin{tikzpicture}
\draw [step=0.15](0,0) grid(0.3,0.15);
\end{tikzpicture})c(\lambda'\cdot\begin{tikzpicture}
\draw [step=0.15](0,0) grid(0.15,0.3);
\end{tikzpicture}))=0
\end{equation}
where $\lambda\cdot\mu=\sum_{\nu}N_{\lambda\mu}^\nu\nu,\ N_{\lambda\mu}^\nu\in\Z_{\geq 0}$ satisfies the Littlewood-Richardson rule. Replacing $c(\lambda)$ by $Q_\lambda\tau$, we get the differential equation
\begin{equation}
\sum_{\lambda,\lambda' \text{even}}(Q_{\lambda}\tau\cdot Q_{\lambda'\cdot\begin{tikzpicture}
\draw [step=0.15](0,0) grid(0.3,0.3);
\end{tikzpicture}}\tau-Q_{\lambda\cdot\begin{tikzpicture}
\draw [step=0.15](0,0) grid(0.15,0.15);
\end{tikzpicture}}\tau\cdot Q_{\lambda'\cdot\begin{tikzpicture}
\draw [step=0.15](0,0) grid(0.3,0.15);
\draw [step=0.15](0,-0.15) grid(0.15,0);
\end{tikzpicture}}\tau+Q_{\lambda\cdot\begin{tikzpicture}
\draw [step=0.15](0,0) grid(0.3,0.15);
\end{tikzpicture}}\tau\cdot Q_{\lambda'\cdot\begin{tikzpicture}
\draw [step=0.15](0,0) grid(0.15,0.3);
\end{tikzpicture}}\tau)=0
\end{equation}
We can similarly write the differential equations in $\pi$-type KP hierarchy, i.e., replacing $c(\lambda)$ by $Q_\lambda\tau$ in (\ref{pieq1}), therefore,
\begin{prp}
The differential equations in the $\pi$-type KP hierarchy are
 \begin{equation}
\sum_{j}(-1)^j\sum_{(n),\lambda}\sum_{(m),\lambda'}a_{(n)\pi}^\lambda a_{(m)\pi}^{\lambda'}Q_{P_\lambda|\mu-\mu_j\rangle}\tau\cdot Q_{P_{\lambda'}|\nu+\mu_j\rangle}\tau=0
\end{equation}
where $|\mu\rangle$ and $|\nu\rangle$  are two Maya diagrams whose charges are $1$ and $-1$ respectively. Choose $|\mu\rangle$ and $|\nu\rangle$ as before, we get
 \begin{equation}\label{pieq11}
\sum_{(n),\lambda}\sum_{(m),\lambda'}a_{(n)\pi}^\lambda a_{(m)\pi}^{\lambda'}(Q_{\lambda}\tau\cdot Q_{\lambda'\cdot\begin{tikzpicture}
\draw [step=0.15](0,0) grid(0.3,0.3);
\end{tikzpicture}}\tau-Q_{\lambda\cdot\begin{tikzpicture}
\draw [step=0.15](0,0) grid(0.15,0.15);
\end{tikzpicture}}\tau\cdot Q_{\lambda'\cdot\begin{tikzpicture}
\draw [step=0.15](0,0) grid(0.3,0.15);
\draw [step=0.15](0,-0.15) grid(0.15,0);
\end{tikzpicture}}\tau+Q_{\lambda\cdot\begin{tikzpicture}
\draw [step=0.15](0,0) grid(0.3,0.15);
\end{tikzpicture}}\tau\cdot Q_{\lambda'\cdot\begin{tikzpicture}
\draw [step=0.15](0,0) grid(0.15,0.3);
\end{tikzpicture}}\tau)=0
\end{equation}
\end{prp}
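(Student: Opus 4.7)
The plan is to deduce the displayed differential equations directly from the Pl\"ucker-type relation~(\ref{pieq1}) established in the preceding proposition, by promoting a scalar-coefficient identity to a bilinear differential identity on the tau function. The crucial point is that the $\pi$-type KP bilinear equation~(\ref{pikp}), viewed in two independent sets of variables after applying the Boson-Fermion correspondence in each tensor factor, is a genuine functional identity in $({\bf x},{\bf y})$ rather than an equality at a single point; expanding $\tau({\bf x})=\sum_\sigma c(\sigma)S_\sigma({\bf x})$ and $\tau({\bf y})=\sum_{\sigma'}c(\sigma')S_{\sigma'}({\bf y})$ and reading off the coefficient of $S_\mu({\bf x})S_\nu({\bf y})$ precisely recovers~(\ref{pieq1}).

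To lift the scalar identity to a differential one, I would invoke the skew operators $Q_\sigma$ introduced in~(\ref{pqn}). Since $Q_\mu S_\lambda=S_{\lambda/\mu}$, the operator $Q_\sigma$ applied to $\tau$ encodes exactly the same scalar data as $c(\sigma)$ but now spread across the whole Schur basis. Concretely, applying a pair of skew operators $Q_{\sigma_1}\otimes Q_{\sigma_2}$ to the two-variable bilinear identity and evaluating at the origin extracts a scalar Pl\"ucker relation; the generating function of all such extractions, read as a functional identity in ${\bf x}$, is obtained from~(\ref{pieq1}) by the uniform substitution $c(\sigma)\mapsto Q_\sigma\tau$. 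Linearity of $Q_{(\cdot)}$ in its index, together with the decomposition of $P_\lambda|\mu-\mu_j\rangle$ into its constituent signed Maya diagrams and the Frobenius correspondence between Maya and Young diagrams, assembles this into the first displayed equation of the proposition.

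For the explicit equation~(\ref{pieq11}), I would specialize to the two Maya diagrams $|\mu\rangle,|\nu\rangle$ fixed in the derivation of~(\ref{pi22}). Since these differ from $|\text{vac}\rangle$ at only a single half-integer each, only finitely many $j$ contribute to the sum. Computing $\sum_j(-1)^j P_\lambda|\mu-\mu_j\rangle\otimes P_{\lambda'}|\nu+\mu_j\rangle$ term by term and translating Maya-diagram combinations into products of Young diagrams via the Littlewood--Richardson rule, exactly as was done for~(\ref{pi22}), produces the three-term sum in~(\ref{pieq11}). The main obstacle I anticipate is the first step: rigorously justifying the substitution $c(\sigma)\mapsto Q_\sigma\tau$. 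This amounts to verifying that the conjugation by $L^\perp_\pi$ in the definitions of $X^\pi_j$ and $X^{*\pi}_j$ commutes in the correct way with the Schur-basis expansion in each Fock-space factor, so that the $\pi$-type bilinear equation separates cleanly into a two-variable functional identity rather than only holding at a single point.
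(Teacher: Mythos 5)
Your proposal follows essentially the same route as the paper: the paper's own derivation of this proposition is precisely the substitution $c(\lambda)\mapsto Q_\lambda\tau$ in the Pl\"uker relation (\ref{pieq1}), followed by the specialization to the two fixed Maya diagrams $|\mu\rangle,|\nu\rangle$ used for (\ref{pi22}) to obtain the three-term form (\ref{pieq11}). In fact you supply more justification than the paper does---the paper simply asserts the substitution by analogy with the classical KP case, whereas you at least identify the two-variable bilinear identity and the skew-operator extraction needed to make it rigorous.
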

The solutions of these equations is known from the discussions before. When $\pi=\emptyset$, the equation (\ref{pieq11}) turns into
\[
\tau\cdot Q_{\begin{tikzpicture}
\draw [step=0.15](0,0) grid(0.3,0.3);
\end{tikzpicture}}\tau-Q_{\begin{tikzpicture}
\draw [step=0.15](0,0) grid(0.15,0.15);
\end{tikzpicture}}\tau\cdot Q_{\begin{tikzpicture}
\draw [step=0.15](0,0) grid(0.3,0.15);
\draw [step=0.15](0,-0.15) grid(0.15,0);
\end{tikzpicture}}\tau+Q_{\begin{tikzpicture}
\draw [step=0.15](0,0) grid(0.3,0.15);
\end{tikzpicture}}\tau\cdot Q_{\begin{tikzpicture}
\draw [step=0.15](0,0) grid(0.15,0.3);
\end{tikzpicture}}\tau=0
\]
which is the KP equation
\[
\frac{3}{4}\frac{\partial^2u}{\partial x_2^2}=\frac{\partial}{\partial x}\left(\frac{\partial u}{\partial x_3}-\frac{3}{2}u\frac{\partial u}{\partial x}-\frac{1}{4}\frac{\partial^3u}{\partial x^3}\right).
\]

Then we have the following remark.
\begin{rmk}
The $\pi$-type KP hierarchy is quite different from other types of classical KP systems from the point of different types of Lie algbras (like BKP, CKP and so on \cite{DKJM,MJD}).
The relation between the tau functions of BKP, CKP and so on and the tau function of the classical KP hierarchy is more complicated than the relation between the tau function of the $\pi$-type KP hierarchy and the tau function of the classical KP hierarchy as mentioned in the Proposition \ref{relation}.
\end{rmk}

\section*{Acknowledgements}
The authors gratefully acknowledge the support of Professors Ke Wu, Zi-Feng Yang, Shi-Kun Wang.
Chuanzhong Li is supported by the National Natural Science Foundation
of China under Grant No. 11571192 and K. C. Wong Magna Fund in Ningbo University.

\end{document}